\newcommand{\nn}{\nonumber}
\newcommand{\mc}{\mathcal}
\newcommand{\mbb}{\mathbb}
\newcommand{\beq}{\begin{equation}}
\newcommand{\eeq}{\end{equation}}
\newcommand{\BEAS}{\begin{eqnarray*}}
\newcommand{\EEAS}{\end{eqnarray*}}
\newcommand{\BEQ}{\begin{equation}}
\newcommand{\EEQ}{\end{equation}}
\newcommand{\BIT}{\begin{itemize}}
\newcommand{\EIT}{\end{itemize}}
\newtheorem{theorem}{Theorem}
\newtheorem{lemma}{Lemma}
\newtheorem{corollary}{Corollary}
\tikzset{circle split part fill/.style  args={#1,#2}{%
 alias=tmp@name, 
  postaction={%
    insert path={
     \pgfextra{%
     \pgfpointdiff{\pgfpointanchor{\pgf@node@name}{center}}%
                  {\pgfpointanchor{\pgf@node@name}{east}}%
     \pgfmathsetmacro\insiderad{\pgf@x}
      \fill[#1] (\pgf@node@name.base) ([xshift=-\pgflinewidth]\pgf@node@name.east) arc
                          (0:180:\insiderad-\pgflinewidth)--cycle;
      \fill[#2] (\pgf@node@name.base) ([xshift=\pgflinewidth]\pgf@node@name.west)  arc
                           (180:360:\insiderad-\pgflinewidth)--cycle;            
         }}}}}
\begin{document}

\title{Ex-post Stable and Fair Payoff Allocation for Renewable Energy Aggregation}

\author{\IEEEauthorblockN{Hossein Khazaei 
and 
Yue Zhao 
}
\IEEEauthorblockA{
Department of Electrical and Computer Engineering,
Stony Brook University, Stony Brook, NY 11794, USA\\
Emails: \{hossein.khazaei, yue.zhao.2\}@stonybrook.edu
}
} 

\maketitle

\begin{abstract}
Aggregating statistically diverse renewable power producers (RPPs) is an effective way to reduce the uncertainty of the RPPs. The key question in aggregation of RPPs is how to allocate payoffs among the RPPs. In this paper, 
a payoff allocation mechanism (PAM) with a simple closed-form expression is proposed: It achieves stability (in the core) and fairness both in the ``ex-post'' sense, i.e., for all possible realizations of renewable power generation. 
Furthermore, this PAM can in fact be derived from the competitive equilibrium in a market. The proposed PAM is evaluated in a simulation study with ten wind power producers in the PJM interconnection. 

\end{abstract}

\section{Introduction}

Renewable energies play a central role in achieving a sustainable energy future. However, renewable energies such as wind and solar power are inherently uncertain and variable, and integrating them into the power system raises significant reliability and efficiency challenges \cite{NERC2009Report, Pinson2013}. A variety of approaches have been proposed to compensate for the uncertainty of renewable energies, such as employing better generation dispatch methods \cite{varaiya2011smart}, fast-ramping generators, energy storage control \cite{Bitar2011, Harsha14, CRZJG16}, and demand response programs \cite{Hobbs2012}. 

Another solution that has received considerable attention is to aggregate statistically diverse renewable energy sources \cite{NERC2009Report, Baeyens2013, ZQRGP15}. 
In an aggregation, renewable power producers (RPPs) pool their generation together so as to reduce the aggregate uncertainty and risk. As a result, by forming an aggregation, the RPPs can earn a higher payoff in total, e.g., in electricity markets. The central question in aggregating RPPs is then how to distribute the total payoff of an aggregation among its member RPPs. 
In particular, in designing payoff allocation mechanisms (PAMs), it is desired to ensure the stability of the aggregation (i.e., no subset of members are willing to leave), and the fairness of the payoffs. 
In general, achieving such properties can be studied in either the so-called ``ex-ante'' or the ``ex-post'' sense: In the ex-ante sense, stability and fairness properties are achieved ``in expectation''; In the ex-post sense, however, these properties must be achieved \emph{for all possible realizations of renewable power generation}. 

%
%
%

In this paper, 
we focus on achieving \emph{ex-post} stability and fairness. 
In particular, we show that a simple payoff allocation mechanism with a closed-form expression is ``in the core'' of the game in the ex-post sense, and thus guarantees the stability of the aggregation. 
We also show that this PAM achieves fairness properties in the ex-post sense as well. Furthermore, we show that the proposed PAM can in fact be derived from the \emph{competitive equilibrium} in a market. 

The remainder of the paper is organized as follows. In Section \ref{Relate}, related works on PAM design for achieving desirable properties in the ex-post sense are summarized. In Section \ref{ProblemFormul}, the structure of the two-settlement electricity market is described, and the desired properties of a PAM are defined. 
In Section \ref{ProposedInCorePAM}, we introduce a PAM that is proved to be in the core of the coalitional game. We also prove that this PAM achieves budget balance, individual rationality, fairness, and no-exploitation properties. In Section \ref{sec:market}, the proposed PAM is derived by computing the competitive equilibrium of a market with transferrable payoff. 
A case study using the data of ten wind power producers in PJM is given in Section \ref{Simul}. Conclusions are drawn in Section \ref{Concl}.

\section{Related Work on Payoff Allocation to Achieve Ex-post Properties} \label{Relate}

A PAM that achieves ``ex-post'' \emph{individual rationality, budget balance and fairness} is proposed in \cite{Nayyar13}. It is also proved that the resultant contract game of the model employed therein has at least one pure Nash equilibrium. A proportional cost sharing mechanism is proposed in \cite{LinBitar}. By neglecting the excess supply of the aggregator, the existence of at least one pure Nash equilibrium in the resultant contract game is investigated, and the properties of the mechanism are analyzed. 

An extension of the PAM in \cite{Nayyar13} is proposed in \cite{Harirchi14}, where statistical information of the RPPs is used to find the optimal forward power contract of the aggregator (similar to prior studies in the ``ex-ante'' settings \cite{Bitar2012}). The PAM of \cite{Nayyar13} is used as the base case, and the difference between the total payoff of the aggregation when it uses the optimal forward contract vs. when it uses the sum of RPPs' power contracts is distributed among the RPPs. The proposed PAM achieves budget balance, fairness and ex-ante individual rationality. 
A proportional cost sharing mechanism that achieves budget balance, no-exploitation and fairness is proposed in \cite{Harirchi16}. Also, assuming that the real-time selling price $p^{r,s}$ (cf. \eqref{PayoffSep} in Section \ref{sec:twoset}) is non-positive, it achieves ex-post individual rationality as well. 
It is then proved that the contract game among the RPPs, under certain assumption, has at least one pure Nash equilibrium. 

\section{Problem Formulation}   \label{ProblemFormul}
\subsection{Renewable Power Producers in a Two Settlement Market} \label{sec:twoset}

We consider RPPs participating in a two-settlement power market consisting of a day-ahead (DA) and a real time (RT) market. 

As a baseline case, we consider an RPP $i$ who participates in the market \emph{separately} from the other RPPs. In the DA market, RPP $i$'s generation at the time of interest in the next day is modeled as a random variable, denoted by $X_i$. RPP $i$ then sells a forward power supply contract in the amount of $c_i$ in the DA-market. It gets a payoff of $p^f c_i$ where $p^f$ denotes the price in the DA market. At the delivery time in the next day, 
RPP $i$ obtains its realized generation $x_i$: a) If it faces a shortfall, i.e., $c_i - x_i>0$, it needs to purchase the remaining power from the RT market at a real-time buying price $p^{r,b}$, b) if it has excess power, i.e. $x_i - c_i>0$, it can sell it in the RT-market at a real-time selling price $p^{r,s}$. In case excess power needs to be penalized as opposed to rewarded, we model such cases by having $p^{r,s}<0$. 
In this paper, the only assumption on prices that we make is $p^{r,s} \le p^{r,b}$, which must hold for no arbitrage. 

As a result, the \emph{realized} payoff of an RPP $i$ who separately participates in the market is given by 
\begin{align}   \label{PayoffSep}
\mathcal{P}_i^{sep} = p^f c_i - p^{r,b} \left(c_i - x_i\right)_+ + p^{r,s} \left(x_i - c_i\right)_+, 
\end{align}
where 
$(\cdot)_+ = \max(0,\cdot)$.

We note that, the DA market price $p^f$ can depend on the DA contracts $\{c_i\}$, and the real time prices $p^{r,b}$ and $p^{r,s}$ can depend on the realized power generation $\{x_i\}$ (see, e.g., \cite{ZQRGP15}). Other factors of uncertainty may also contribute to the price determination in the DA and RT markets. These complicate the choices of $c_i$ by an RPP, although with simplifying assumptions (e.g. price taking and known distribution of $x_i$) an optimal choice of $c_i$ can be computed (see  \cite{Bitar2012} among others). 

Nonetheless, the above complicating factors on price and contract determination do not affect our analysis for renewable energy aggregation in the ``ex-post'' sense, as will be shown next. 

\subsection{Aggregation of RPPs and Payoff Allocation} \label{GameDef}
Aggregation of RPPs reduces the risk of their uncertain future generation by exploiting the statistical diversity among the RPPs. 
We consider an aggregator that aggregates the power generation from a set of $N$ RPPs, denoted by $\mathcal{N}$, participates in the DA-RT market on behalf of them, and allocates its payoff back among the RPPs. In particular, we employ the following model (cf. \cite{Nayyar13, LinBitar, Harirchi14}) for aggregating RPPs: 
\begin{enumerate}
\item[a.] Each RPP $i$ submits a DA commitment $c_i ~(\ge0)$ to the aggregator. 
\item[b.] In the DA market, the aggregator sells a forward power contract in the amount of $c_{\mc{N}} = \sum_{i \in \mathcal{N}}{c_i}$. At the delivery time, the aggregator collects all the realized generation from the RPPs, denoted by $x_{\mc{N}} = \sum_{i \in \mathcal{N}}{x_i}$, to meet the commitment $c_{\mc{N}}$. The deviation  is settled in the RT market in the same way as in Section \ref{sec:twoset}. The realized payoff of the aggregator is thus given by
\begin{align}\label{PayoffAgg}
\mathcal{P}_A &= p^f c_{\mc{N}} - p^{r,b} \left(c_{\mc{N}} - x_{\mc{N}}\right)_+ + p^{r,s} \left(x_{\mc{N}} - c_{\mc{N}}\right)_+.
\end{align}
\item[c.] The aggregator returns a payoff of $\mc{P}_i$ to each RPP $i$. 
\end{enumerate}

The key design question for the aggregator lies in the step c. above, i.e., how to determine the payoffs $\mc{P}_i$. In particular, in  designing the payoff allocation mechanism (PAM), several desirable properties that an aggregator would like to achieve are the following: 
\begin{enumerate}
	\item \emph{Ex-post budget balance}: $\sum_{i \in \mathcal{N}}{\mathcal{P}_i} = \mathcal{P}_A$. 
	\item \emph{Ex-post individual rationality}: $\mathcal{P}_i \geq \mathcal{P}_i^{sep},  \forall i \in \mathcal{N}$. 
	\item \emph{Fairness}: For any two $\mbox{RPP}_i$ and $\mbox{RPP}_j$, if $c_i - x_i=c_j - x_j$, then $\mathcal{P}_i - p^f c_i = \mathcal{P}_j - p^f c_j$. 
	\item \emph{No-exploitation}: If $c_i - x_i=0$, then $\mathcal{P}_i = p^f c_i$. 
	\item \emph{Ex-post in the core}: Being in the core in an ``ex-post'' sense means that the RPPs' payoffs satisfy the following condition: 
	 if any subset of the RPPs leave the aggregation, separately form their own aggregation, and then participate in the market, they will get a payoff no higher than the sum of their payoffs originally from the aggregator. As a result, being in the core implies that the grand coalition/aggregation is stable.
\end{enumerate}
We note that achieving these properties in the ex-post sense means that they must hold for \emph{all possible realizations} of renewable power generation. 

Next, we make precise the definition of property 5), i.e., being in the core in the ex-post sense. 

\noindent{\bf A Coalitional Game and Its Core:} 
Similar to the realized payoffs \eqref{PayoffSep} and \eqref{PayoffAgg}, we define a function $v(\cdot)$ for the value of a coalition of a subset of RPPs $\mathcal{T}\subseteq\mc{N}$ as follows: 
\begin{align}  \label{ValOfCoal}
v \left(\mathcal{T}\right) &= p^f c_\mathcal{T} - p^{r,b} \left(c_\mathcal{T} - x_\mathcal{T}\right)_+ + p^{r,s} \left(x_\mathcal{T} - c_\mathcal{T}\right)_+ 
\end{align}
where $c_{\mathcal{T}} = \sum_{i \in \mathcal{T}}{c_i}$ and $x_{\mathcal{T}} = \sum_{i \in \mathcal{T}}{x_i}$. 
With the above function $v(\cdot)$, and a PAM $\{\mc{P}_i, i \in\mc{N}\}$, we have a well-defined coalitional game \cite{GameBook}. Now, in a coalitional game, a PAM $\{\mc{P}_i\}$ is said to be in the core if and only if it satisfies the following set of inequalities: 
	\begin{align}   \label{DefCoreGame}
	\forall \mc{T} \subseteq \mathcal{N}, ~ v\left(\mc{T}\right) \leq \sum_{i \in \mc{T}} \mc{P}_i. 
	\end{align}
Note that the intuition of \eqref{DefCoreGame} is exactly as described above in property 5). If \eqref{DefCoreGame} is satisfied for all possible realizations of the random renewable generation, the PAM $\{\mc{P}_i\}$ is said to be in the core in the ex-post sense. 

In Section \ref{ProposedInCorePAM}, we will introduce a PAM that achieves all the five properties above. Before we proceed to describing the proposed PAM, we close this section by providing a reasoning of the modeling assumption b. described above, namely, the aggregator sells a forward power contract in the amount of $c_{\mc{N}} = \sum_{i \in \mathcal{N}}{c_i}$. 


\subsection{The Necessity of $c_{\mc{N}} = \sum_{i \in \mathcal{N}}{c_i}$}    \label{CommentOnPowerConAgg}
Here we show that $c_{\mc{N}} = \sum_{i \in \mathcal{N}}{c_i}$ is a necessary condition for achieving property 2), i.e., ex-post individual rationality. It is thus also necessary for achieving property 5). 
We note that this is regardless of the choices of DA commitments $\{c_i\}$ that the RPPs submit to the aggregator. 

\begin{lemma}    \label{OptimalQuantityMustEqualSum}
	$\forall~\{c_i\}$, to achieve ex-post individual rationality and budget balance, we must have that $c_{\mc{N}} = \sum_{i \in \mathcal{N}}{c_i}$. 
\end{lemma}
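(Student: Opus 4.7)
My plan is a proof by contradiction that exploits a single, carefully chosen realization of $\{x_i\}$. Suppose, for contradiction, that the aggregator's DA forward contract is some amount $c_\mc{N}$ with $c_\mc{N} \ne \sum_{i \in \mc{N}} c_i$. I will exhibit one realization under which ex-post budget balance and ex-post individual rationality cannot simultaneously hold.

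The realization I would pick is the trivial one: $x_i = c_i$ for every $i \in \mc{N}$. Under it every individual RPP has neither a shortfall nor an excess, so by \eqref{PayoffSep}, $\mc{P}_i^{sep} = p^f c_i$, and hence $\sum_{i \in \mc{N}} \mc{P}_i^{sep} = p^f \sum_{i \in \mc{N}} c_i$. At the same time $x_\mc{N} = \sum_{i \in \mc{N}} c_i$, so \eqref{PayoffAgg} collapses to $\mc{P}_A = p^f c_\mc{N} - p^{r,b}\left(c_\mc{N} - \sum_i c_i\right)_+ + p^{r,s}\left(\sum_i c_i - c_\mc{N}\right)_+$.

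Combining ex-post budget balance ($\sum_i \mc{P}_i = \mc{P}_A$) and ex-post individual rationality ($\sum_i \mc{P}_i \ge \sum_i \mc{P}_i^{sep}$) gives the scalar inequality $\mc{P}_A \ge p^f \sum_i c_i$. I would then rearrange this into an inequality in the signed quantity $\Delta := c_\mc{N} - \sum_i c_i$ and split on its sign: when $\Delta > 0$ the inequality reduces to $p^f \ge p^{r,b}$, and when $\Delta < 0$ it reduces to $p^f \le p^{r,s}$. Either conclusion contradicts the standard no-arbitrage ordering $p^{r,s} \le p^f \le p^{r,b}$ implicit in a consistent two-settlement market, so only $\Delta = 0$, i.e.\ $c_\mc{N} = \sum_i c_i$, is admissible.

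The main difficulty is not computation but the realization choice: $x_i = c_i$ is precisely the realization that annihilates every individual $(c_i - x_i)_+$ and $(x_i - c_i)_+$ term, forcing the entire imbalance onto the aggregator's single deviation term $\Delta$. Any other realization would leave the resulting inequality entangled with individual deviation quantities and would not pin $c_\mc{N}$ down so cleanly. A minor expositional subtlety is that the sharp conclusion invokes the no-arbitrage chain $p^{r,s} \le p^f \le p^{r,b}$, which is slightly stronger than the explicitly stated assumption $p^{r,s} \le p^{r,b}$ but is standard in this market model.
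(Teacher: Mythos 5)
Your proof is correct and follows essentially the same strategy as the paper's: pick one adversarial realization, show that $\mathcal{P}_A - \sum_{i\in\mc{N}} \mathcal{P}_i^{sep} = \bigl(p^f - p^{r,b}\bigr)\bigl(c_{\mc{N}} - \sum_{i\in\mc{N}} c_i\bigr) < 0$ (resp.\ with $p^{r,s}$ when $c_{\mc{N}} < \sum_i c_i$), and invoke budget balance to conclude that individual rationality must fail; the paper uses an all-shortfall (resp.\ all-surplus) realization in which the individual deviation terms cancel in the difference, whereas your choice $x_i = c_i$ makes them vanish outright, but the resulting algebra is identical. The one point to tighten is that your contradiction requires the \emph{strict} inequalities $p^{r,s} < p^f < p^{r,b}$ --- the conclusion $p^f \ge p^{r,b}$ is not contradicted by the non-strict chain $p^{r,s} \le p^f \le p^{r,b}$ when equality holds --- and the paper handles exactly this by folding the strict condition $p^f < p^{r,b}$ into the chosen scenario, which is legitimate because prices are part of the ex-post uncertainty over which the properties must hold.
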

\begin{proof}
Note that ex-post individual rationality implies that $\mathcal{P}_i \geq \mathcal{P}_i^{sep},  \forall i \in \mathcal{N}$ must hold for all possible realizations of  renewable power generation. 
We now prove that, if $c_{\mc{N}} > \sum_{i \in \mathcal{N}}{c_i}$, there exists a realization scenario in which it is impossible to achieve individual rationality. Such impossibility for the case of $c_{\mc{N}} < \sum_{i \in \mathcal{N}}{c_i}$ can be proven similarly. 
	
Suppose $c_{\mc{N}} > \sum_{i \in \mathcal{N}}{c_i}$. Consider the scenario that a) every RPPs has a shortfall, i.e., $\forall i \in \mathcal{N}, x_i < c_i$, and b) $p^f < p^{r,b}$. Consequently, 
	\begin{align}
	\mathcal{P}_A &= p^f c_{\mc{N}} - p^{r,b} \left(c_{\mc{N}} - \sum_{i \in \mathcal{N}} x_i\right),   \nonumber  \\
	\sum_{i \in \mathcal{N}} \mathcal{P}_i^{sep} &= \sum_{i \in \mathcal{N}} \left( p^f c_i - p^{r,b} \left(c_i - x_i\right) \right)  \nonumber    \\
~&= p^f \sum_{i \in \mathcal{N}} c_i - p^{r,b} \left(\sum_{i \in \mathcal{N}} c_i - \sum_{i \in \mathcal{N}} x_i\right),   \nonumber    \\
	\Rightarrow    \mathcal{P}_A & - \sum_{i \in \mathcal{N}} \mathcal{P}_i^{sep} = \left(p^f - p^{r,b}\right) \left(c_{\mc{N}} - \sum_{i \in \mathcal{N}} c_i\right) < 0,   \nonumber
	\end{align}
given that $p^f < p^{r,b}$. 
	With budget balance, $\mc{P}_A = \sum_{i\in\mc{N}}\mc{P}_i$. Thus, $\sum_{i\in\mc{N}}(\mc{P}_i -  \mathcal{P}_i^{sep}) < 0$, and it is impossible to achieve individual rationality (cf. property 2) in Section \ref{GameDef}) in this scenario.
\end{proof}

Now, with $c_{\mc{N}} = \sum_{i \in \mathcal{N}}{c_i}$, a desirable implication is that there is \emph{always} a non-negative excess profit from aggregating  the RPPs compared to having them participate in the market separately. 
\begin{corollary}\label{coro1}
$\forall~\{c_i\}$, with $c_{\mc{N}} = \sum_{i \in \mathcal{N}}{c_i}$, we have that $\mathcal{P}_A \ge \sum_{i \in \mathcal{N}} \mathcal{P}_i^{sep}$ for all possible realizations of renewable power generation. 
\end{corollary}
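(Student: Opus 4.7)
The plan is to compute the difference $\mathcal{P}_A - \sum_{i \in \mathcal{N}} \mathcal{P}_i^{sep}$ directly under the hypothesis $c_{\mc{N}} = \sum_{i} c_i$ and reduce it to a transparent inequality that holds pointwise in the realizations $\{x_i\}$.

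First, because $c_{\mc{N}} = \sum_{i} c_i$ and $x_{\mc{N}} = \sum_{i} x_i$ by definition, the forward-revenue terms sum to $p^f c_{\mc{N}}$ on both sides of the claim, and therefore cancel out of the difference. Introducing the deviations $d_i \pgdef c_i - x_i$ and $d_{\mc{N}} \pgdef c_{\mc{N}} - x_{\mc{N}} = \sum_{i\in\mc{N}} d_i$, the corollary reduces to
\begin{align*}
-p^{r,b}(d_{\mc{N}})_+ + p^{r,s}(-d_{\mc{N}})_+ \;\ge\; \sum_{i \in \mc{N}}\bigl[-p^{r,b}(d_i)_+ + p^{r,s}(-d_i)_+\bigr].
\end{align*}

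Next, I would use the elementary identities $(z)_+ = (z+|z|)/2$ and $(-z)_+ = (|z|-z)/2$ to rewrite the real-time settlement function as
\begin{align*}
g(d) \pgdef -p^{r,b}(d)_+ + p^{r,s}(-d)_+ = -\tfrac{p^{r,b}+p^{r,s}}{2}\,d \;-\; \tfrac{p^{r,b}-p^{r,s}}{2}\,|d|.
\end{align*}
The linear-in-$d$ parts telescope under summation because $\sum_{i} d_i = d_{\mc{N}}$, so the required inequality collapses to
\begin{align*}
\tfrac{p^{r,b}-p^{r,s}}{2}\left(\sum_{i\in\mc{N}} |d_i| \;-\; |d_{\mc{N}}|\right) \;\ge\; 0,
\end{align*}
which holds because $p^{r,b} \ge p^{r,s}$ is the paper's standing no-arbitrage assumption and $|d_{\mc{N}}| = \bigl|\sum_{i} d_i\bigr| \le \sum_{i} |d_i|$ is the triangle inequality.

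There is essentially no hard step here; the only observation to make is that the two-sided ramp function $g$ decomposes into a linear part plus a non-positive penalty proportional to $-|d|$, after which the aggregation gain becomes synonymous with the triangle inequality. Economically, this captures the intuition that the aggregator saves the spread $(p^{r,b}-p^{r,s})$ on the amount by which shortfalls and surpluses of different RPPs net out against each other, so the inequality is strict precisely when some $d_i$ have opposite signs.
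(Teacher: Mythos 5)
Your proof is correct, but it takes a genuinely different route from the paper. The paper establishes the corollary through the subsequent excess-profit lemma: it partitions $\mathcal{N}$ into the surplus set $\mathcal{S}^+$ and shortfall set $\mathcal{S}^-$, computes $\sum_i \mathcal{P}_i^{sep}$ over each, and then case-analyzes on whether the aggregate has a net surplus or net shortfall to arrive at the closed form $\mathcal{P}_A - \sum_i \mathcal{P}_i^{sep} = (p^{r,b}-p^{r,s})\min\left(x_{\mathcal{S}^+}-c_{\mathcal{S}^+},\, c_{\mathcal{S}^-}-x_{\mathcal{S}^-}\right)$, whose nonnegativity is then immediate. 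You instead rewrite the real-time settlement as a linear term plus a concave penalty $-\tfrac{p^{r,b}-p^{r,s}}{2}|d|$, let the linear parts telescope, and finish with the triangle inequality --- no case analysis and no surplus/shortfall partition needed. The two arguments are in fact algebraically equivalent: writing $a = c_{\mathcal{S}^-}-x_{\mathcal{S}^-}\ge 0$ and $b = x_{\mathcal{S}^+}-c_{\mathcal{S}^+}\ge 0$, your quantity $\sum_i |d_i| - |d_{\mathcal{N}}|$ equals $a+b-|a-b| = 2\min(a,b)$, so your computation silently recovers the paper's exact identity, not merely the inequality. What your route buys is brevity and a clean conceptual reason (superadditivity of the settlement function is exactly the triangle inequality); what the paper's route buys is the explicit $\min$ expression, which it needs later to describe precisely how the excess profit is allocated under the proposed PAM. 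Your closing remark on strictness should be qualified by $p^{r,b} > p^{r,s}$, but that is a side comment and does not affect the proof.
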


%

%
%
%
\section{Payoff Allocation Mechanism in the Core} \label{ProposedInCorePAM}
In this section, we introduce a payoff allocation mechanism that achieves all the five desired properties defined in Section \ref{GameDef}.

We first define the following notations for the (realization dependent) sets of RPPs with generation surpluses and shortfalls, respectively. 
\begin{align} \label{RPPsWithSurplussAndShortfall}
	&\mathcal{S}^+ := \left\{ i \in \mathcal{N} ~|~ x_i - c_i \geq 0 \right\}, ~ \mathcal{S}^- := \left\{ i \in \mathcal{N} ~|~ x_i - c_i < 0 \right\}    \nonumber \\
	&c_{\mathcal{S}^+} \! = \! \sum_{i \in \mathcal{S}^+} c_i , ~ x_{\mathcal{S}^+} \!=\! \sum_{i \in \mathcal{S}^+} x_i ,~ c_{\mathcal{S}^-} \!=\! \sum_{i \in \mathcal{S}^-} c_i , ~  x_{\mathcal{S}^-} \!=\! \sum_{i \in \mathcal{S}^-} x_i. 
\end{align}
We then have the following lemma on expressing the excess profit in terms of the above notations: 
\begin{lemma}
	The excess profit from aggregating the RPPs 
	\begin{align}   \label{ExcessProfit}
		\mathcal{P}_A &- \sum_{i \in \mathcal{N}} \mathcal{P}_i^{sep}    \nonumber \\
		&=\left(p^{r,b} - p^{r,s}\right) \min \left( \left(x_{\mathcal{S}^+} - c_{\mathcal{S}^+}\right) , \left(c_{\mathcal{S}^-} - x_{\mathcal{S}^-}\right) \right)
	\end{align}
\end{lemma}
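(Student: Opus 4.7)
The plan is to expand both terms on the left side in terms of the partition $\mathcal{N}=\mathcal{S}^+\cup\mathcal{S}^-$, cancel the common $p^f c_{\mathcal{N}}$ contributions, and reduce the residual expression to a simple two-case analysis depending on the sign of $x_{\mathcal{N}}-c_{\mathcal{N}}$.

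First, since the aggregator commits $c_{\mathcal{N}}=\sum_{i\in\mathcal{N}} c_i$ (Lemma~\ref{OptimalQuantityMustEqualSum}), I write
\[
\mathcal{P}_A = p^f c_{\mathcal{N}} - p^{r,b}(c_{\mathcal{N}}-x_{\mathcal{N}})_+ + p^{r,s}(x_{\mathcal{N}}-c_{\mathcal{N}})_+ .
\]
Next, using \eqref{PayoffSep} and the definitions in \eqref{RPPsWithSurplussAndShortfall}, the separated payoffs sum to
\[
\sum_{i\in\mathcal{N}} \mathcal{P}_i^{sep} = p^f c_{\mathcal{N}} - p^{r,b}(c_{\mathcal{S}^-}-x_{\mathcal{S}^-}) + p^{r,s}(x_{\mathcal{S}^+}-c_{\mathcal{S}^+}),
\]
because $(c_i-x_i)_+$ is supported on $\mathcal{S}^-$ and $(x_i-c_i)_+$ is supported on $\mathcal{S}^+$. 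Subtracting these expressions cancels $p^f c_{\mathcal{N}}$.

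The key observation for the second step is the identity
\[
x_{\mathcal{N}} - c_{\mathcal{N}} = (x_{\mathcal{S}^+}-c_{\mathcal{S}^+}) - (c_{\mathcal{S}^-}-x_{\mathcal{S}^-}).
\]
Introducing the shorthand $a := x_{\mathcal{S}^+}-c_{\mathcal{S}^+}\ge 0$ and $b := c_{\mathcal{S}^-}-x_{\mathcal{S}^-} \ge 0$, the residual difference becomes
\[
\mathcal{P}_A - \sum_{i\in\mathcal{N}} \mathcal{P}_i^{sep} = -p^{r,b}(b-a)_+ + p^{r,s}(a-b)_+ + p^{r,b} b - p^{r,s} a.
\]
I then split into the two cases $a\ge b$ and $a<b$. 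In the former case only the $(a-b)_+$ term survives and the expression collapses to $(p^{r,b}-p^{r,s})\,b=(p^{r,b}-p^{r,s})\min(a,b)$; in the latter case only the $(b-a)_+$ term survives and it collapses to $(p^{r,b}-p^{r,s})\,a=(p^{r,b}-p^{r,s})\min(a,b)$.

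I do not expect any real obstacle: the identity for $x_{\mathcal{N}}-c_{\mathcal{N}}$ immediately organizes the cancellations, and the two-case computation is mechanical. The only care needed is keeping track of which $(\cdot)_+$ terms vanish in each case, and noting that the formula is also consistent with Corollary~\ref{coro1} since $p^{r,b}\ge p^{r,s}$ and $\min(a,b)\ge 0$.
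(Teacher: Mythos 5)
Your proposal is correct and follows essentially the same route as the paper: both decompose $\sum_{i\in\mathcal{N}}\mathcal{P}_i^{sep}$ over the partition $\mathcal{S}^+\cup\mathcal{S}^-$, cancel the $p^f c_{\mathcal{N}}$ terms, and case-split on the sign of $x_{\mathcal{N}}-c_{\mathcal{N}}=(x_{\mathcal{S}^+}-c_{\mathcal{S}^+})-(c_{\mathcal{S}^-}-x_{\mathcal{S}^-})$. Your $a,b$ shorthand merely packages the same computation a bit more compactly, and you carry out both cases explicitly where the paper does one and declares the other similar.
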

\begin{proof}
First, we have that
	\begin{align}   \label{ExcssProfitSum}
	&\mathcal{P}_i^{sep} = \begin{cases}
	p^f c_i + p^{r,s} \left(x_i - c_i\right)       & \quad \text{if }  i \in \mathcal{S}^+ \\
	p^f c_i - p^{r,b} \left(c_i - x_i\right)       & \quad \text{if }  i \in \mathcal{S}^- \\
	\end{cases}   \nonumber  
	\end{align}
As a result, 
	\begin{align}
	&\sum_{i \in \mathcal{N}} \mathcal{P}_i^{sep} = \sum_{i \in \mathcal{S}^+} \mathcal{P}_i^{sep} + \sum_{i \in \mathcal{S}^-} \mathcal{P}_i^{sep}  \nonumber \\
	&=p^f \left(c_{\mathcal{S}^+} + c_{\mathcal{S}^-}\right) - p^{r,b} \left( c_{\mathcal{S}^-} - x_{\mathcal{S}^-} \right) + p^{r,s} \left( x_{\mathcal{S}^+} - c_{\mathcal{S}^+} \right) 
	\end{align}
%
We now consider the case of
		$x_{\mathcal{S}^+} - c_{\mathcal{S}^+}  \geq  c_{\mathcal{S}^-} - x_{\mathcal{S}^-}$, i.e., there is an excess power in total in the aggregation. In this case, 
		\begin{align}   \label{ExcssProfitAgg1}
		&\mathcal{P}_A = p^f \left(c_{\mathcal{S}^+} + c_{\mathcal{S}^-}\right) + p^{r,s} \left(x_{\mathcal{S}^+} + x_{\mathcal{S}^-} - c_{\mathcal{S}^+} - c_{\mathcal{S}^-}\right)
		\end{align}
		From (\ref{ExcssProfitSum}) and (\ref{ExcssProfitAgg1}), we have:
		\begin{align}
		&\mathcal{P}_A - \sum_{i \in \mathcal{N}} \mathcal{P}_i^{sep} = (p^{r,b} - p^{r,s}) \left(c_{\mathcal{S}^-} - x_{\mathcal{S}^-}\right) \nonumber \\
		&= \left(p^{r,b} - p^{r,s}\right) \min \left( \left(x_{\mathcal{S}^+} - c_{\mathcal{S}^+}\right) , \left(c_{\mathcal{S}^-} - x_{\mathcal{S}^-}\right) \right)
		\end{align}
		
		The case when $x_{\mathcal{S}^+} - c_{\mathcal{S}^+}  <  c_{\mathcal{S}^-} - x_{\mathcal{S}^-}$ can be proved similarly. 

\end{proof}

\subsection{The Proposed Payoff Allocation Mechanism}
We now introduce the main result of this paper, namely, the following payoff allocation mechanism: 
\begin{align}   \label{NewPAM}
	{\mc{P}}_i = 
	\begin{cases}
	p^f c_i +  		p^{r,b} \left(x_i - c_i\right)       & \quad \text{if }  x_{\mathcal{N}} - c_{\mathcal{N}} < 0 \\
	p^f c_i + p^* \left(x_i - c_i\right)       & \quad \text{if }  x_{\mathcal{N}} - c_{\mathcal{N}} = 0\\
p^f c_i +  		p^{r,s} \left(x_i - c_i\right)       & \quad \text{if }  x_{\mathcal{N}} - c_{\mathcal{N}} > 0 
	\end{cases},
\end{align}
where $p^{r,s}\le p^*\le p^{r,b}$, and $p^*$ can be chosen arbitrarily within this range. 

To understand the intuition of this PAM, it is instructive to consider the following two cases, respectively: 
\begin{itemize}
\item \emph{Case 1: The grand coalition has a shortfall in total, i.e. $x_{\mathcal{N}} - c_{\mathcal{N}} < 0$}. In this case, according to 
\eqref{NewPAM}, $\mc{P}_i = \mc{P}_i^{sep}, \forall i\in \mc{S}^-$, i.e., those RPPs with a shortfall earns exactly the same as if they each participates in the market separately. In comparison, $\forall i\in \mc{S}^+,~ \mc{P}_i - \mc{P}_i^{sep} = (p^{r,b} - p^{r,s})(x_i - c_i)$. As a result, only those RPPs in $S^+$ can gain extra earnings compared to if they participate in the markets separately. In other words, \emph{all the excess profit} \eqref{ExcessProfit} are allocated to those RPPs with a surplus. 

\item \emph{Case 2: The grand coalition has a surplus in total, i.e. $x_{\mathcal{N}} - c_{\mathcal{N}} > 0$}. In this case, $\mc{P}_i = \mc{P}_i^{sep}, \forall i\in \mc{S}^+$, i.e., those RPPs with a surplus earn exactly the same as if they participate in the market separately. In comparison, $\forall i\in \mc{S}^-, ~\mc{P}_i - \mc{P}_i^{sep} = (p^{r,b} - p^{r,s})(c_i - x_i)$. As a result, \emph{all the excess profit} \eqref{ExcessProfit} are allocated to those RPPs with a shortfall. 

\item \emph{Case 3: The grand coalition exactly meets its total commitment, i.e. $x_{\mathcal{N}} = c_{\mathcal{N}}$}. In this case, there is a family of payoff allocations that are all in the core: any $p^*$ such that $p^{r,s}\le p^*\le p^{r,b}$ works. Consequently, the PAM that is in the core is not unique in this sense. 
\end{itemize}


\subsection{The Proposed PAM is in the Core}

We now state the main theorem of the paper: 

\begin{theorem} \label{incorethm}
The proposed PAM (\ref{NewPAM}) satisfies the five properties 1) - 5) (cf. Section \ref{GameDef}) in the ex-post sense.
\end{theorem}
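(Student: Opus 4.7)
The plan is to verify the five properties one at a time, noting that they fall into two groups: properties 1), 3), and 4) are essentially one-line consequences of the closed form \eqref{NewPAM}, while 2) is subsumed by 5), so the real work is to establish the core inequality \eqref{DefCoreGame}. The key structural observation that makes the whole proof clean is that, for each realization, $\mc{P}_i - p^f c_i$ depends on $i$ only through the scalar $x_i - c_i$ and is in fact \emph{linear} in that scalar (with a slope determined entirely by the global sign of $x_\mc{N} - c_\mc{N}$). Therefore for any coalition $\mc{T}$, $\sum_{i\in\mc{T}} \mc{P}_i - p^f c_\mc{T}$ is a linear function of $x_\mc{T} - c_\mc{T}$, whereas $v(\mc{T}) - p^f c_\mc{T}$ is the piecewise-linear function $-p^{r,b}(c_\mc{T}-x_\mc{T})_+ + p^{r,s}(x_\mc{T}-c_\mc{T})_+$. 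Showing that the former dominates the latter reduces to a short sign check.

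First I would dispatch the easy properties. For \textbf{budget balance}, I would sum \eqref{NewPAM} over $i\in\mc{N}$ and observe that in the shortfall branch $(x_\mc{N}-c_\mc{N})<0$ the sum becomes $p^f c_\mc{N} - p^{r,b}(c_\mc{N}-x_\mc{N})$, matching \eqref{PayoffAgg}; the surplus branch is symmetric, and in the balanced branch both expressions collapse to $p^f c_\mc{N}$. \textbf{Fairness} follows because, in each branch, $\mc{P}_i - p^f c_i$ is a function of $x_i-c_i$ alone, so equal shortfalls yield equal excess payoffs. \textbf{No-exploitation} is immediate: if $x_i = c_i$, the multiplicative term in \eqref{NewPAM} vanishes regardless of the branch, leaving $\mc{P}_i = p^f c_i$.

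The main step is the \textbf{core inequality}, which also implies \textbf{individual rationality} by taking $\mc{T}=\{i\}$. I would fix an arbitrary $\mc{T}\subseteq\mc{N}$ and compute
\begin{align*}
\Delta(\mc{T}) \;:=\; \sum_{i\in\mc{T}} \mc{P}_i - v(\mc{T}) \;=\; \pi\,(x_\mc{T}-c_\mc{T}) + p^{r,b}(c_\mc{T}-x_\mc{T})_+ - p^{r,s}(x_\mc{T}-c_\mc{T})_+,
\end{align*}
where $\pi\in\{p^{r,b},p^{\ast},p^{r,s}\}$ according to the sign of $x_\mc{N}-c_\mc{N}$. I would then split into the two subcases $x_\mc{T}\ge c_\mc{T}$ and $x_\mc{T}<c_\mc{T}$. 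In the former, $\Delta(\mc{T}) = (\pi - p^{r,s})(x_\mc{T}-c_\mc{T})$, and in the latter, $\Delta(\mc{T}) = (p^{r,b}-\pi)(c_\mc{T}-x_\mc{T})$. Since $p^{r,s}\le\pi\le p^{r,b}$ in every branch of \eqref{NewPAM}, both expressions are nonnegative, giving $v(\mc{T})\le\sum_{i\in\mc{T}}\mc{P}_i$ for all $\mc{T}$ and all realizations.

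The main obstacle — really the only place that needs care — is making sure the bounds on the slope $\pi$ used for the grand coalition are compatible with \emph{every} sub-coalition's situation. For example, in the shortfall branch $\pi=p^{r,b}$, a sub-coalition $\mc{T}$ with its own net surplus could conceivably receive less than what it could obtain by defecting; the linear-vs-piecewise-linear calculation above is precisely what rules this out, and the constraint $p^{r,s}\le p^{\ast}\le p^{r,b}$ in the balanced branch is exactly what is needed to keep $\Delta(\mc{T})\ge 0$ on both sides of the kink. Once this sign check is made, all five properties are in hand.
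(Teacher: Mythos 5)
Your proof is correct and follows essentially the same route as the paper's: the easy properties are checked directly from the closed form, individual rationality is obtained from the core inequality with $\mc{T}=\{i\}$, and the core inequality itself is a sign check comparing the linear quantity $\sum_{i\in\mc{T}}\mc{P}_i - p^f c_\mc{T} = \pi(x_\mc{T}-c_\mc{T})$ against the piecewise-linear $v(\mc{T})-p^f c_\mc{T}$, using $p^{r,s}\le\pi\le p^{r,b}$. Your write-up merely compresses the paper's three-by-two case analysis by parametrizing the grand-coalition branch with the single slope $\pi$; the substance is identical.
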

\begin{proof}
Budget balance, fairness and no exploitation can be verified straightforwardly. Next, we prove that the PAM is in the core in the ex-post sense \eqref{DefCoreGame}, which then implies individual rationality. 
	
	Consider $\mc{T}$ as an arbitrary subset of $\mathcal{N}$, and define ${{\mc{T}}^+} := \left\{ i \in {\mc{T}} ~|~ x_i - c_i \geq 0 \right\}$ and ${{\mc{T}}^-} := \left\{ i \in {\mc{T}} ~|~ x_i - c_i < 0 \right\}$. 
	We now consider the three scenarios of the grand coalition having a shortfall in total, or a surplus, or neither, respectively. 
	\begin{itemize}
	\item \emph{Case 1: The grand coalition has a shortfall in total, i.e. $x_{\mathcal{N}} - c_{\mathcal{N}} < 0 $}. 
	
	We have that, 
	\begin{align}  \label{CoreNewPAMpart1}
		\sum_{i \in {\mc{T}}} {\mc{P}}_i &\stackrel{(\ref{NewPAM})}{=} \sum_{i \in {\mc{T}}} p^f c_i \nonumber \\
		&+ p^{r,b} \left( \left( \sum_{i \in {{{\mc{T}}^+}}} \left(x_i - c_i\right) \right) - \left( \sum_{i \in {{{\mc{T}}^-}}} \left(c_i - x_i\right) \right) \right).
	\end{align}
	In comparison, if the subset ${\mc{T}}$ leaves the grand aggregation and participates in the market as a ``smaller'' aggregation, its payoff would be the following: 
	\begin{align}  \label{CoreNewPAMpart2}
		&v \left({\mc{T}}\right) = \sum_{i \in {\mc{T}}} p^f c_i \nonumber \\
		&	 + \begin{cases}
			-p^{r,b} \left( \left( \sum_{i \in {{{\mc{T}}^-}}} \left(c_i - x_i\right) \right) - \left( \sum_{i \in {{{\mc{T}}^+}}} \left(x_i - c_i\right) \right) \right)   \\
			\hspace{140pt} \text{ if }  x_{{\mc{T}}} - c_{{\mc{T}}} < 0 \\
			+p^{r,s} \left( \left( \sum_{i \in {{{\mc{T}}^+}}} \left(x_i - c_i\right) \right) - \left( \sum_{i \in {{{\mc{T}}^-}}} \left(c_i - x_i\right) \right) \right)   \\
			\hspace{140pt} \text{ if }  x_{{\mc{T}}} - c_{{\mc{T}}} \ge 0.
		\end{cases}
	\end{align}
	From (\ref{CoreNewPAMpart1}) and (\ref{CoreNewPAMpart2}), with $p^{r,b} \ge p^{r,s}$, 
	we have that $\sum_{i \in {\mc{T}}} {\mc{P}}_i \geq v \left({\mc{T}}\right)$.
	
	\item \emph{Case 2: The grand coalition has a surplus in total, i.e. $x_{\mathcal{N}} - c_{\mathcal{N}} > 0$}. 
	
	We have that, 
	\begin{align}  \label{CoreNewPAMpart3}
		\sum_{i \in {\mc{T}}} {\mc{P}}_i &\stackrel{(\ref{NewPAM})}{=} \sum_{i \in {\mc{T}}} p^f c_i \nonumber \\
		&+ p^{r,s} \left( \left( \sum_{i \in {{{\mc{T}}^+}}} \left(x_i - c_i\right) \right) - \left( \sum_{i \in {{{\mc{T}}^-}}} \left(c_i - x_i\right) \right) \right)
	\end{align}
	From (\ref{CoreNewPAMpart2}) and (\ref{CoreNewPAMpart3}), with $p^{r,b} \ge p^{r,s}$, we have that $\sum_{i \in {\mc{T}}} {\mc{P}}_i \geq v \left({\mc{T}}\right)$. 
	
	\item \emph{Case 3: The grand coalition exactly meets its total commitment, i.e. $x_{\mathcal{N}} = c_{\mathcal{N}}$}. 
	We have that, 
	\begin{align}  \label{CoreNewPAMpart4}
		\sum_{i \in {\mc{T}}} {\mc{P}}_i &\stackrel{(\ref{NewPAM})}{=} \sum_{i \in {\mc{T}}} p^f c_i \nonumber \\
		&+ p^{*} \left( \left( \sum_{i \in {{{\mc{T}}^+}}} \left(x_i - c_i\right) \right) - \left( \sum_{i \in {{{\mc{T}}^-}}} \left(c_i - x_i\right) \right) \right) 
	\end{align}
	From (\ref{CoreNewPAMpart2}) and (\ref{CoreNewPAMpart4}), with $p^{r,s} \le p^* \le p^{r,b} $, we have that $\sum_{i \in {\mc{T}}} {\mc{P}}_i \geq v \left({\mc{T}}\right)$.

	\end{itemize}
	
\end{proof}

\section{Deriving the Payoff Allocation Mechanism from a Competitive Equilibrium} \label{sec:market}
In this section, we show that the proposed PAM \eqref{NewPAM} can in fact be derived from computing the competitive equilibrium of a specially defined market. This is indeed how we discovered this PAM. The idea based on competitive equilibrium shares similar insight with that in a prior work \cite{ZQRGP15} which derives a closed-form PAM in the core in the ``ex-ante'' sense. 

\subsection{Market with Transferrable Payoff} \label{MTP}
We first define the following market with transferrable payoff \cite{GameBook}: 
\begin{itemize}
	\item The RPPs, denoted by $\mathcal{N}$, are a finite set of $N$ agents. 
	\item There is one type of input goods --- power generation. 
	\item Each agent $i \in \mathcal{N}$ has an ``endowment'' in the amount of $x_i \in \mathbb{R}_+$ --- the realized power of RPP $i$. 
	\item Each agent $i \in \mathcal{N}$ has a continuous, nondecreasing, and concave ``production'' function $f_i: \mathbb{R}_+ \rightarrow \mathbb{R}$:
	\begin{align}
	f_i(x_i) = \mathcal{P}_i^{sep} = p^f c_i - p^{r,b} \left(c_i - x_i\right)_+ + p^{r,s} \left(x_i - c_i\right)_+. \label{prodfun}
	\end{align}
\end{itemize}
Since all the ``production'' functions $\{f_i\}$ produce the same type of transferrable output, i.e., monetary payoff, the above formulation precisely defines a market with transferrable payoff. 

Next, a coalitional game can be defined based on a market with transferrable payoff \cite{GameBook}. Specifically, for any coalition of a subset of RPPs $\mathcal{T}\subseteq\mc{N}$, define 
\begin{align} \label{CoalValueFunMarketGame}
v \left(\mc{T}\right) = &\max_{\{z_i\in \mathbb{R}_+, i \in \mc{T}\}} ~ \sum_{i \in \mc{T}}f_i(z_i) \\
 &s.t.~ \sum_{i \in \mc{T}}{z_i} = \sum_{i \in \mc{T}}{x_i}. \nn
\end{align}
In other words, $\{z_i, i\in\mc{T}\}$ denotes a \emph{redistribution} of the total realized power $\sum_{i\in\mc{T}} x_i$ among the members of $\mc{T}$. This $v(\mc{T})$ represents the \emph{maximum} total payoff that the members of $\mc{T}$ can achieve among all possible redistributions, computed according to $f_i$ defined in \eqref{prodfun}. 
The core of this coalitional game is also called the ``core of the market''. 

We now prove that this coalitional game is exactly the same as the coalitional game defined previously in \eqref{ValOfCoal}. 
\begin{lemma} \label{equgame}
The values of coalitions \eqref{CoalValueFunMarketGame} are the same as \eqref{ValOfCoal}.
\end{lemma}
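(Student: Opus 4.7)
The plan is to establish both directions of the equality between the left-hand-side maximum in \eqref{CoalValueFunMarketGame} and the right-hand-side formula in \eqref{ValOfCoal}. First, since $\sum_{i \in \mc{T}} p^f c_i = p^f c_{\mc{T}}$ is a constant in \eqref{CoalValueFunMarketGame}, I substitute $y_i := z_i - c_i$ and $g(y) := p^{r,s} (y)_+ - p^{r,b}(-y)_+$, reducing the problem to maximizing $\sum_{i\in\mc{T}} g(y_i)$ subject to $\sum_{i\in\mc{T}} y_i = x_{\mc{T}} - c_{\mc{T}}$ and $y_i \geq -c_i$. The goal then becomes to show this maximum equals $-p^{r,b}(c_{\mc{T}} - x_{\mc{T}})_+ + p^{r,s}(x_{\mc{T}} - c_{\mc{T}})_+$, so that adding back $p^f c_{\mc{T}}$ recovers \eqref{ValOfCoal}.

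For the upper bound, the key observation is that because $p^{r,s} \leq p^{r,b}$, the piecewise-linear function $g$ satisfies $g(y) = \min(p^{r,s} y,\, p^{r,b} y)$ for every real $y$ (the min is attained by the $p^{r,s}$ branch when $y \ge 0$ and by the $p^{r,b}$ branch when $y < 0$). Summing either pointwise bound over $i \in \mc{T}$ yields both $\sum_{i\in\mc{T}} g(y_i) \leq p^{r,s}(x_{\mc{T}} - c_{\mc{T}})$ and $\sum_{i\in\mc{T}} g(y_i) \leq p^{r,b}(x_{\mc{T}} - c_{\mc{T}})$; selecting the tighter of the two according to the sign of $x_{\mc{T}} - c_{\mc{T}}$ gives exactly the target expression above.

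For the lower bound, I exhibit feasible redistributions $\{z_i\}$ that saturate this bound in two cases. If $x_{\mc{T}} \geq c_{\mc{T}}$, I take $z_i = c_i + (x_{\mc{T}} - c_{\mc{T}})/|\mc{T}| \geq c_i$ so that every $y_i \geq 0$, giving $\sum_{i\in\mc{T}} g(y_i) = p^{r,s}(x_{\mc{T}} - c_{\mc{T}})$. If $x_{\mc{T}} < c_{\mc{T}}$ (so in particular $c_{\mc{T}} > 0$), I take the proportional allocation $z_i = c_i x_{\mc{T}}/c_{\mc{T}} \in [0, c_i]$ so that every $y_i \leq 0$, giving $\sum_{i\in\mc{T}} g(y_i) = p^{r,b}(x_{\mc{T}} - c_{\mc{T}})$. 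Feasibility ($z_i \geq 0$ and the sum constraint) is immediate in both constructions.

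The main obstacle is conceptual rather than computational: recognizing that $g$ is concave and linear on each of the two half-lines defined by $y = 0$, so any feasible redistribution whose individual $y_i$ all share the sign of $\sum y_i = x_{\mc{T}} - c_{\mc{T}}$ is already optimal. With this insight in place, the matching upper and lower bounds are routine and the lemma follows.
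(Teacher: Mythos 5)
Your proof is correct. It follows the same two-inequality skeleton as the paper's proof (an upper bound on the redistribution maximum \eqref{CoalValueFunMarketGame} plus an explicit redistribution achieving \eqref{ValOfCoal}), but both halves are carried out differently. For the upper bound, the paper simply asserts that \eqref{ValOfCoal} dominates because it is the maximum payoff achievable by $\mc{T}$ after aggregation (in effect, Corollary~\ref{coro1} applied to the coalition $\mc{T}$ with realizations $\{z_i\}$); your pointwise identity $g(y)=\min\left(p^{r,s}y,\,p^{r,b}y\right)$ summed over $i\in\mc{T}$ is a self-contained, fully elementary derivation of the same bound, and as a byproduct it re-proves Corollary~\ref{coro1}. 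For achievability, the paper keeps $z_i=c_i$ on $\mc{T}^+$ and spreads the shortfall over $\mc{T}^-$ (and symmetrically in the surplus case), whereas you use a uniform shift in the surplus case and a proportional scaling $z_i=c_i x_{\mc{T}}/c_{\mc{T}}$ in the deficit case; both constructions work because they share the one property that matters, namely that every deviation $z_i-c_i$ has the same sign as $x_{\mc{T}}-c_{\mc{T}}$, so each term sits on the active linear branch of $g$. Your feasibility checks ($z_i\ge 0$, market clearing) are complete, including the observation that $c_{\mc{T}}>0$ in the deficit case since $x_{\mc{T}}\ge 0$. The only cosmetic remark is that your framing makes explicit the concavity/optimality insight that the paper leaves implicit in the phrase ``straightforwardly.''
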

\begin{proof}
Straightforwardly, $\eqref{ValOfCoal} \ge  \eqref{CoalValueFunMarketGame}$ because \eqref{ValOfCoal} is the maximum achievable payoff by the subset $\mc{T}$ after their aggregation. Next, we show that \eqref{ValOfCoal} can be achieved by \eqref{CoalValueFunMarketGame}, i.e., $\eqref{ValOfCoal} \le  \eqref{CoalValueFunMarketGame}$. 

Similarly as in the proof of Theorem \ref{incorethm}, we define ${\mc{T}}^+ := \left\{ i \in {\mc{T}} ~|~ x_i - c_i \geq 0 \right\}$ and $\mc{T}^- := \left\{ i \in {\mc{T}} ~|~ x_i - c_i < 0 \right\}$. The intuition of a redistribution $\{z_i\}$ to achieve \eqref{ValOfCoal} is the following: We give as much of the \emph{excess} power of the RPPs in $\mc{T}^+$ as possible to the RPPs in $\mc{T}^-$ to offset their \emph{deficit} power. 

Specifically, 
if $x_{\mc{T}} - c_{\mc{T}} < 0$, i.e., $\sum_{i\in\mc{T}^-} \left(c_i - x_i\right) > \sum_{i\in\mc{T}^+} \left(x_i - c_i\right)$, we let 
\begin{align}
\forall i\in\mc{T}^+,& ~z_i = c_i, \label{zplus}\\
\forall i\in\mc{T}^-,& ~ x_i\le z_i \le c_i, \nn\\
&\mbox{ so that } \sum_{i\in\mc{T}^-} \left(z_i - x_i\right) = \sum_{i\in\mc{T}^+} \left(x_i - z_i\right). \label{zminus}
\end{align}
As a result, 
\begin{align}
\sum_{i \in \mc{T}}f_i(z_i) &= \sum_{i \in \mc{T}^+}f_i(z_i) + \sum_{i \in \mc{T}^-}f_i(z_i) \nn\\
&= \sum_{i \in \mc{T}^+}p^f c_i + \sum_{i \in \mc{T}^-}\left(p^fc_i - p^{r,b}(c_i-z_i) \right) \nn\\
& =p^f c_{\mc{T}} - p^{r,b}\sum_{i \in \mc{T}^-}\left((c_i-x_i) - (z_i - x_i) \right) \nn\\
& =p^f c_{\mc{T}} - p^{r,b}\left(\sum_{i \in \mc{T}^-}(c_i-x_i) - \sum_{i \in \mc{T}^+}(x_i - c_i) \right) \label{2ndtolast}\\
& = p^f c_{\mc{T}} - p^{r,b}(c_{\mc{T}} - x_{\mc{T}}) =  \eqref{ValOfCoal}, 
\end{align}
where \eqref{2ndtolast} is implied by \eqref{zplus} and \eqref{zminus}. 

The case of $x_{\mc{T}} - c_{\mc{T}} \ge 0$ can be proved similarly. 
\end{proof}

As a result, from the property of market with transferrable payoff (cf. Proposition 264.2 in \cite{GameBook}), we immediately have that this coaltional game has a \emph{non-empty core}. 

Moreover, this formulation as a market enables us to compute a solution in the core by deriving the \emph{competitive equilibrium} (CE) of the market, as follows. 

\subsection{Competitive Equilibrium} 
For the market with transferrable payoff defined in the last subsection, a competitive equilibrium is defined \cite{GameBook} as a price-quantity pair of $p^*\in\mbb{R}_+$ and $\bm{z}^*\in\mbb{R}_+^N$, such that, 
\begin{itemize}
\item[i)]
For each agent $i$, $z_i^*$ solves the following problem: 
\begin{align}  \label{OptZi}
\max_{z_i \in \mathbb{R}_+}{\left( f_i \left(z_i\right) - p^* \left(z_i - x_i\right) \right)}. 
\end{align}
\item[ii)] $\bm{z}^*$ is a redistribution, i.e., $\sum_{i\in\mc{N}}z_i^* = \sum_{i\in\mc{N}}x_i$. 
\end{itemize}
The intuition of a CE is the following: At the price $p^*$, i) to maximize its payoff, each agent $i$ can trade \emph{any} amount of the input (realized power) on the market \emph{without} worrying whether there is enough supply or demand to fulfill its trade request, and ii) collectively, the market  of input supply and demand \emph{still clears}, i.e., the resulting $\bm{z}^*$ from the optimal trades is feasible. 

At a competitive equilibrium $(p^*, \bm{z}^*)$, $p^*$ is called the \emph{competitive price}, and the value of the maximum of \eqref{OptZi} is called the \emph{competitive payoff} of agent $i$. 

We then have the following theorem (cf. Proposition 267.1 in \cite{GameBook}) dictating that all the CEs are in the core.
\begin{theorem} \label{thmCEcore}
Every profile of competitive payoffs in a market with transferable payoff is in the core of the market. 
\end{theorem}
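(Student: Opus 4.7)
The plan is to prove Theorem \ref{thmCEcore} by showing that the competitive payoff profile $\{\mathcal{P}_i^* = f_i(z_i^*) - p^*(z_i^* - x_i)\}_{i\in\mc{N}}$ satisfies $\sum_{i\in\mc{T}}\mathcal{P}_i^* \geq v(\mc{T})$ for every $\mc{T}\subseteq\mc{N}$. The argument uses only two ingredients: the individual optimality condition (i) of a CE, and the resource-feasibility constraint in the definition \eqref{CoalValueFunMarketGame} of $v(\mc{T})$. Notably, the market-clearing condition (ii) is not needed here; what matters is that inside any candidate coalition $\mc{T}$, the alternative redistributions the coalition could deviate to are required to preserve the coalition's own aggregate endowment.

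First I would fix an arbitrary coalition $\mc{T}\subseteq\mc{N}$ and let $\{\tilde{z}_i\}_{i\in\mc{T}}$ be any feasible redistribution for $\mc{T}$, meaning $\tilde{z}_i\ge 0$ and $\sum_{i\in\mc{T}}\tilde{z}_i = \sum_{i\in\mc{T}}x_i$. Next I would apply condition (i) of the CE definition agent-by-agent: since $z_i^*$ maximizes $f_i(z_i) - p^*(z_i - x_i)$ over $z_i\in\mbb{R}_+$, we have
\begin{equation}
f_i(z_i^*) - p^*(z_i^* - x_i) \;\ge\; f_i(\tilde{z}_i) - p^*(\tilde{z}_i - x_i), \quad \forall i\in\mc{T}.\nn
\end{equation}
Summing this inequality over $i\in\mc{T}$ and using the feasibility identity $\sum_{i\in\mc{T}}(\tilde{z}_i - x_i) = 0$ collapses the price term on the right-hand side, yielding
\begin{equation}
\sum_{i\in\mc{T}} \mathcal{P}_i^* \;=\; \sum_{i\in\mc{T}} \bigl[f_i(z_i^*) - p^*(z_i^* - x_i)\bigr] \;\ge\; \sum_{i\in\mc{T}} f_i(\tilde{z}_i).\nn
\end{equation}
Because $\{\tilde{z}_i\}$ was an arbitrary feasible redistribution for $\mc{T}$, taking the supremum over all such redistributions gives $\sum_{i\in\mc{T}}\mathcal{P}_i^* \ge v(\mc{T})$, which is exactly the core condition \eqref{DefCoreGame}.

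Finally, to conclude I would note that budget balance of the competitive payoffs, $\sum_{i\in\mc{N}}\mathcal{P}_i^* = \sum_{i\in\mc{N}}f_i(z_i^*) = v(\mc{N})$, follows immediately by applying the same computation to $\mc{T}=\mc{N}$ and invoking the market-clearing condition (ii) together with Lemma \ref{equgame}; combined with the inequalities above this places $\{\mathcal{P}_i^*\}$ in the core. I do not anticipate any serious obstacle: the argument is essentially the classical ``price-supported payoffs are in the core of a market game'' observation, and the only place one needs care is keeping straight that the price term cancels precisely because the deviating coalition's redistribution must respect its own endowment sum rather than the grand coalition's. No concavity, continuity, or monotonicity of $f_i$ is actually invoked, only the existence of the maximizer $z_i^*$ guaranteed by the CE assumption.
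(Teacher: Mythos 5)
Your proof is correct. Note that the paper itself does not prove this theorem --- it imports it verbatim as Proposition 267.1 of \cite{GameBook} --- so there is no in-paper argument to compare against; what you have written is precisely the standard textbook proof that the citation points to. The two steps are exactly right: summing the agent-level optimality inequalities of \eqref{OptZi} over a coalition $\mc{T}$ against any redistribution feasible for $\mc{T}$ makes the price terms cancel via $\sum_{i\in\mc{T}}(\tilde{z}_i - x_i)=0$, giving $\sum_{i\in\mc{T}}\mc{P}_i^* \ge v(\mc{T})$ as in \eqref{DefCoreGame}; and the market-clearing condition (ii) is then used only once, to pin down $\sum_{i\in\mc{N}}\mc{P}_i^* = v(\mc{N})$. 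Your side remarks are also accurate --- concavity and continuity of $f_i$ play no role here (they matter for the existence of the equilibrium in Theorem \ref{thmCE}, not for this implication), and the cancellation hinges on the deviating coalition preserving its own endowment sum, not the grand coalition's.
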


Accordingly, to find a solution in the core of the market, which is also the core of the coalitional game for aggregating RPPs (cf. Lemma \ref{equgame}), it is sufficient to find a CE in the market with transferrable payoff defined in the last section. 
\vspace{5pt}

\noindent {\bf Deriving the Competitive Equilibrium:}

For the market with transferrable payoff defined in the last subsection, we have the following theorem: 
\begin{theorem} \label{thmCE}
Competitive equilibrium exists, and the competitive payoffs necessarily take the form of the proposed PAM \eqref{NewPAM}. 
\end{theorem}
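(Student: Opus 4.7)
\textbf{Proof plan for Theorem \ref{thmCE}.}

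The plan is to exploit the piecewise-linear concave structure of $f_i$ to characterize, for each fixed candidate price $p^*$, the set of maximizers of \eqref{OptZi}; then to determine which prices are compatible with market clearing $\sum_{i} z_i^* = x_{\mathcal{N}}$; and finally to evaluate the competitive payoffs and recognize \eqref{NewPAM}.

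First I would rewrite, for each agent $i$, the objective in \eqref{OptZi} as $g_i(z_i) := f_i(z_i) - p^* z_i + p^* x_i$, noting that $f_i$ is concave and piecewise linear with slope $p^{r,b}$ for $z_i<c_i$ and slope $p^{r,s}$ for $z_i>c_i$, so $g_i$ has slopes $p^{r,b}-p^*$ and $p^{r,s}-p^*$ on the two pieces. A short case analysis in $p^*$ then gives the argmax set: if $p^* > p^{r,b}$ the unique maximizer is $z_i^*=0$; if $p^* < p^{r,s}$ the problem is unbounded; if $p^{r,s}<p^*<p^{r,b}$ the unique maximizer is $z_i^*=c_i$; if $p^*=p^{r,b}$ then $z_i^*$ can be any point in $[0,c_i]$; and if $p^*=p^{r,s}$ then $z_i^*$ can be any point in $[c_i,\infty)$.

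Next I would impose the market-clearing constraint $\sum_{i\in\mc{N}}z_i^* = x_{\mc{N}}$ to pin down the competitive price. Because the unbounded and zero cases contradict clearing (the former always, the latter whenever $x_{\mc{N}}>0$), and because $z_i^*=c_i$ for all $i$ clears only when $c_{\mc{N}}=x_{\mc{N}}$, one gets: (i) if $x_{\mc{N}}<c_{\mc{N}}$, necessarily $p^*=p^{r,b}$ and any redistribution with $z_i^*\in[0,c_i]$ summing to $x_{\mc{N}}$ clears; (ii) if $x_{\mc{N}}>c_{\mc{N}}$, necessarily $p^*=p^{r,s}$ and any redistribution with $z_i^*\in[c_i,\infty)$ summing to $x_{\mc{N}}$ clears; (iii) if $x_{\mc{N}}=c_{\mc{N}}$, any $p^*\in[p^{r,s},p^{r,b}]$ paired with $z_i^*=c_i$ is a CE. In particular a CE exists in every case.

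Finally I would substitute into the competitive payoff $f_i(z_i^*)-p^*(z_i^*-x_i)$. In case (i), $z_i^*\le c_i$ so $f_i(z_i^*)=p^f c_i - p^{r,b}(c_i-z_i^*)$; adding $-p^{r,b}(z_i^*-x_i)$ cancels the $z_i^*$ terms and yields $p^f c_i + p^{r,b}(x_i-c_i)$, matching the shortfall branch of \eqref{NewPAM}. Symmetrically, case (ii) gives $p^f c_i + p^{r,s}(x_i-c_i)$, and case (iii) gives $p^f c_i + p^*(x_i-c_i)$ with $p^*\in[p^{r,s},p^{r,b}]$. Combined with Theorem \ref{thmCEcore} and Lemma \ref{equgame}, this both establishes existence and shows the competitive payoffs necessarily coincide with \eqref{NewPAM}. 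The only subtle point is that in cases (i)--(ii) the redistribution $\bm{z}^*$ is not unique, yet the competitive payoff is: this is precisely because the objective is flat across the argmax set, so the $z_i^*$-dependence cancels, which is what I would flag as the main item to verify carefully rather than leave implicit.
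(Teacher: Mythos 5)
Your proposal is correct and follows essentially the same route as the paper's proof: characterize the maximizers of \eqref{OptZi} for each price regime using the piecewise-linear structure of $f_i$, rule out prices outside $[p^{r,s},p^{r,b}]$ via market clearing, split into the three cases on the sign of $x_{\mc{N}}-c_{\mc{N}}$, and substitute to recover \eqref{NewPAM}. Your explicit verification that the competitive payoff is independent of the (non-unique) choice of $\bm{z}^*$ is a point the paper leaves as ``immediate to check,'' so spelling it out is a welcome but not substantively different addition.
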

\begin{proof}
With the production function $f_i(x_i)$ defined to be $\mathcal{P}_i^{sep}$ as in \eqref{prodfun}, we observe that $f_i(x_i)$ is a \emph{piecewise linear} function: $f_i'(x_i) = 
\begin{cases}
p^{r,b}, \mbox{ if } x_i < c_i \\
p^{r,s}, \mbox{ if } x_i > c_i
\end{cases}\!\!\!\!. $

As a result, at a CE, we must have $p^{r,b} \le p^* \le p^{r,s}$. Otherwise, by solving \eqref{OptZi}, either all RPPs would sell all of their power, or all of them would buy an infinite amount of power; Neither case would clear the market with $\sum_{i\in\mc{N}}z_i^* = \sum_{i\in\mc{N}}x_i$. 

We now analyze the optimal behavior of any agent $i$ under the following three scenarios of the competitive price $p^*$: 
\begin{itemize}
\item If $p^* = p^{r,b}$, 
the maximum of \eqref{OptZi} is achieved if and only if $z_i \le c_i$. 
\item If $p^* = p^{r,s}$, 
the maximum of \eqref{OptZi} is achieved if and only if $z_i \ge c_i$. 
\item If $p^{r,s} < p^* < p^{r,b}$,  
the maximum of \eqref{OptZi} is achieved if and only if $z_i = c_i$. 
\end{itemize}
%
%
%

To derive the competitive price $p^*$ that clears the market with $\sum_{i\in\mc{N}}z_i^* = \sum_{i\in\mc{N}}x_i$, we consider the following three scenarios: 

\emph{Case i)} $x_\mc{N} - c_\mc{N}<0$: 
As result, at the CE, $\sum_{i\in\mc{N}}z_i^* < \sum_{i\in\mc{N}}c_i^*$. From the above, we \emph{necessarily} have $p^* = p^{r,b}$. Indeed, with $p^* = p^{r,b}$, there exists $\bm{z}^*$ such that a) $z^*_i \le c_i$, and b) $\sum_{i\in\mc{N}}z_i^* = \sum_{i\in\mc{N}}x_i < c_\mc{N}$. 

Moreover, it is immediate to check that the competitive payoff of RPP $i$ equals $p^f c_i +  		p^{r,b} \left(x_i - c_i\right)$ (cf. \eqref{NewPAM}). 

\emph{Case ii)} $x_\mc{N} - c_\mc{N}>0$: 
As result, at the CE, $\sum_{i\in\mc{N}}z_i^* > \sum_{i\in\mc{N}}c_i^*$. From the above, we \emph{necessarily} have $p^* = p^{r,s}$. Indeed, with $p^* = p^{r,s}$, there exists $\bm{z}^*$ such that a) $z^*_i \ge c_i$, and b) $\sum_{i\in\mc{N}}z_i^* = \sum_{i\in\mc{N}}x_i > c_\mc{N}$. 

Moreover, the competitive payoff of RPP $i$ equals $p^f c_i +  		p^{r,s} \left(x_i - c_i\right)$ (cf. \eqref{NewPAM}). 

\emph{Case iii)} $x_\mc{N} - c_\mc{N}=0$: 
In this case, $\forall p^*,~s.t.~p^{r,s} \le p^* \le p^{r,b}$, $z_i^* = c_i, \forall i$ achieves $\sum_{i\in\mc{N}}z_i^* = \sum_{i\in\mc{N}}x_i = c_\mc{N}$. 

Moreover,  the competitive payoff of RPP $i$ equals $p^f c_i +  		p^* \left(x_i - c_i\right)$ (cf. \eqref{NewPAM}).
\end{proof}

From Theorem \ref{thmCEcore} and \ref{thmCE}, we conclude that the competitive payoffs that equal \eqref{NewPAM} are always in the core of the market, and hence the core of the coalitional game \eqref{ValOfCoal}. This thus offers an alternative proof for Theorem \ref{incorethm}.

\section{Simulation}  \label{Simul}
\subsection{Data Description and Simulation Setup}

We perform simulations using the NREL dataset \cite{NREL2010} based on ten wind power producers located in PJM. The simulation is run with the data of these ten wind power producers (WPPs) for the month of Feb. 2004. 
In each hour of the simulation, each WPP $i$ submits a day ahead forward power contract $c_i$: In particular, we assume that each WPP $i$ submits its \emph{optimal} contract as if it separately participates in the market. This optimal hourly day-ahead forward power contract can be solved as the solution to a \emph{news-vendor} problem \cite{Bitar2012}. To solve for this optimal individual contract, the statistical distribution of each WPP's future generation is needed. We consider that each WPP uses normal distributions to model its future generation: For each hour of the next day, a) an WPP's forecast for that future hour, available from the NREL dataset \cite{NREL2010}, is taken as the the mean of such a distribution, and b) the variance is approximated using the data of that WPP during Jan. 2004. 

\subsection{Simulation and Results}
For each hour, the aggregator a) commits in the DA market a forward power contract equal to the sum of the WPPs' forward power contracts, and b) collects the sum of the WPPs' realized generation. The payoff of the aggregator is computed according to \eqref{PayoffAgg} based on this sum of realization and its previously committed DA contract for this hour. The proposed PAM \eqref{NewPAM} is then employed to allocate the payoff to all the WPPs. 

For comparison, we evaluate the payoff that each WPP would get had it separately participate in the market \eqref{PayoffSep}. Furthermore, we compare the WPPs' payoffs with the payoff allocation mechanism developed in our prior work for the ``ex-ante'' setting \cite{Zhao2016}. 
The numerical results are summarized in Fig. \ref{fig1}. 
The sum of the payoffs of the ten WPPs for the three evaluations are compared in Table I. Finally, the traces of the hourly payoff over time for WPP 5 with the proposed PAM, the ex-ante PAM, and separate participation in the market are plotted in Fig. \ref{fig2}. 

\begin{figure}[h!]
	\centering
	\includegraphics[scale=0.5]{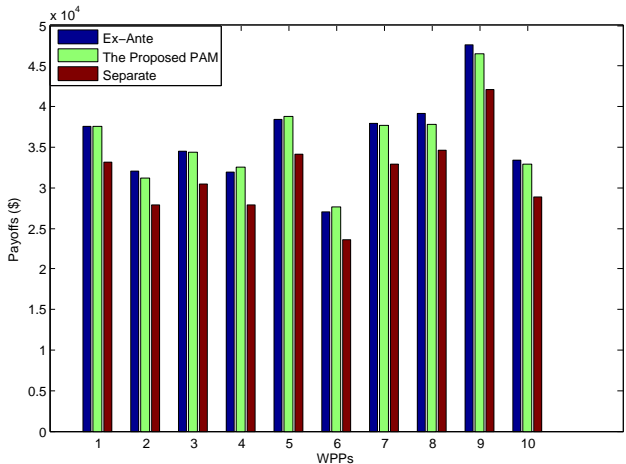}
	\caption{Comparison of daily average payoffs of the WPPs.}
	\label{fig1}
\end{figure}

We make the following observations:
\begin{itemize}
\item 
As expected, the proposed PAM achieves individual rationality in the ex-post sense. In contrast, the ``ex-ante'' PAM \cite{Zhao2016} achieves individual rationality only in expectation, and does not guarantee ex-post individual rationality. This can be observed around hour 145 in Fig. \ref{fig2}. 

Moreover, the proposed PAM, as proved, is also in the core in the ex-post sense. 

\item
As shown in Table \ref{tab1}, the total payoff of the WPPs with the proposed PAM is slightly $\left(0.72\%\right)$ lower than that with the ``ex-ante'' PAM \cite{Zhao2016}. The reason is the following. In the ``ex-ante'' PAM, the aggregator commits the \emph{optimal} power contract by solving the news-vendor problem \emph{for the grand coalition}. In contrast, in the model employed in this paper (cf. modeling assumptions a.-c. in Section \ref{GameDef}), the aggregator simply commits the sum of the WPPs' power contracts, which we assume are derived by solving the optimal forward contracts \emph{by each WPP individually} (and hence are not optimal from an aggregation's point of view). 

\begin{table}[h!]
	\centering
	\caption{Sum of the payoffs of the ten WPPs}
	\begin{tabular}{|c c c c|} 
		\hline
		& ``Ex-ante'' & The proposed & Separate\\ 
		& PAM \cite{Zhao2016} & \emph{in-the-core} PAM & ~ \\
		\hline
		Total      &  &  & \ \\ 
		payoff of &  $1.0428 \times 10^7$  &  $1.0353 \times 10^7$    &  $9.1480 \times 10^6$    \\
		the RPPs &    &     & \\
		\hline
	\end{tabular}
	\label{tab1}
\end{table}

\item
As shown in Fig. \ref{fig2}, the payoff traces of the proposed PAM and the ``ex-ante'' PAM follow each other very closely most of the times. However, the ``ex-ante'' PAM appears to be ``riskier''. The payoff with the ``ex-ante'' PAM sometimes has high peaks, higher than the corresponding payoff with the proposed PAM. An example is around hour 80. Meanwhile, the payoff with the ``ex-ante'' PAM sometimes also has low valleys, lower than the corresponding payoff with the proposed PAM. An example is around hour 580. This phenomenon of ``ex-ante'' PAM being riskier than the proposed is consistent with the fact that the ``ex-ante'' PAM is only individually rational in expectation, whereas the proposed PAM is in the core for all possible realizations of the renewable generation. 

\begin{figure}[h!]
	\centering
	\includegraphics[scale=0.4]{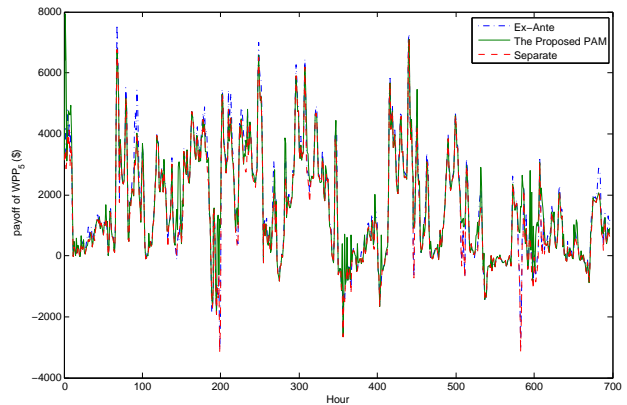}
	\caption{Comparison of payoffs of WPP $5$ for the proposed PAM, the ``ex-ante'' PAM \cite{Zhao2016}, and separate market participation.}
	\label{fig2}
\end{figure}
\end{itemize}

\section{Conclusion}    \label{Concl}
We have studied the problem of payoff allocation for aggregating renewable power producers (RPPs). We have first shown that the forward power contract of the aggregator must be equal to the sum of the member RPPs' power contracts in order to achieve ex-post individual rationality. We have then proposed a payoff allocation mechanism that is \emph{in the core} in the ex-post sense, and hence ensures the stability of the RPP aggregation. The proposed payoff allocation mechanism also achieves budget balance, individual rationality, fairness and no-exploitation, all in the ex-post sense. Moreover, we show that the proposed PAM can be derived from a competitive equilibrium in a market with transferrable payoff. 
%

\ifCLASSOPTIONcaptionsoff
  \newpage
\fi

\bibliographystyle{IEEEtran}
{\bibliography{TPS}}

\end{document}